\newtheorem{Theorem}{Theorem}
\newtheorem{Lemma}{Lemma}
\newtheorem{Corollary}{Corollary}
\theoremstyle{definition}
\newtheorem{Definition}{Definition}
\newtheorem{Problem}{Problem}
\newtheorem{MyRemark}{Remark}
\newtheorem{MyExample}{Example}
\newcommand{\lcp}{\mathsf{lcp}}
\newcommand{\hd}{\mathsf{d_H}}
\newcommand{\argmax}{\mathop{\rm arg~max}\limits}
\newcommand{\NLZ}{\mathsf{LZN}}
\newcommand{\zn}{\mathsf{zn}}
\newcommand{\SLZ}{\mathsf{LZS}}
\newcommand{\zs}{\mathsf{zs}}
\newcommand{\CN}{\mathsf{CN}}
\newcommand{\cn}{\mathsf{cn}}
\newcommand{\avl}{\mathsf{avl}}
\newcommand{\poly}{\mathrm{poly}}
\title{Compressed Communication Complexity of \\ Hamming Distance}
\author{Shiori~Mitsuya$^{1}$}
\author{Yuto~Nakashima$^{1}$}
\author{Shunsuke~Inenaga$^{1,2}$}
\author{Hideo~Bannai$^3$}
\author{Masayuki~Takeda$^{1}$}
\affil{
  \normalsize{
  \textit{$^1$Department of Informatics, Kyushu University, Fukuoka, Japan}\\
  \texttt{\{mitsuya.shiori, yuto.nakashima, inenaga, takeda\}@inf.kyushu-u.ac.jp}\\
  \textit{$^2$PRESTO, Japan Science and Technology Agency, Kawaguchi, Japan}\\
  \textit{$^3$M\&D Data Science Center, Tokyo Medical and Dental University, Tokyo, Japan}\\
  \texttt{hdbn.dsc@tmd.ac.jp}
  }
}
\date{}
\begin{document}
\maketitle

\begin{abstract}
  We consider the communication complexity of
  the \emph{Hamming distance} of two strings.
  Bille et al. [SPIRE 2018] considered
  the communication complexity of the longest common prefix (LCP) problem
  in the setting where the two parties
  have their strings in a compressed form,
  i.e., represented by the Lempel-Ziv 77 factorization (LZ77)
  with/without self-references.
  We present a randomized public-coin protocol
  for a joint computation of the Hamming distance
  of two strings represented by LZ77 without self-references.
  While our scheme is heavily based on Bille et al.'s LCP protocol,
  our complexity analysis is original which uses
  Crochemore's C-factorization and Rytter's AVL-grammar.
  As a byproduct, we also show that
  LZ77 with/without self-references are not monotonic
  in the sense that their sizes can increase by a factor of
  4/3 when a prefix of the string is removed.
\end{abstract}

\section{Introduction}

Communication complexity, first introduced by Yao~\cite{Yao79},
is a well studied sub-field of complexity theory
which aims at quantifying the total amount of communication (bits)
between the multiple parties who separately hold partial inputs of a function $f$.
The goal of the $k~(\geq 2)$ parties is to jointly compute the value of $f(X_1, \ldots, X_k)$,
where $X_i$ denotes the partial input that the $i$th party holds.
Communication complexity studies lower bounds and upper bounds
for the communication cost of a joint computation of a function $f$.
Due to the rapidly increasing amount of distributed computing tasks,
communication complexity has gained its importance
in the recent highly digitalized society.
This paper deals with the most basic and common setting where
the two parties, Alice and Bob,
separately hold partial inputs $A$ and $B$
and they perform a joint computation of $f(A, B)$ for a function $f$
following a specified protocol.

We pay our attention to communication complexity of string problems
where the inputs $A$ and $B$ are strings over an alphabet $\Sigma$.
Communication complexity of string problems has played
a critical role in the space lower bound analysis of several streaming processing
problems including
Hamming/edit/swap distances~\cite{CliffordJPS13},
pattern matching with k-mismatches~\cite{RadoszewskiS20},
parameterized pattern matching~\cite{JalseniusPS13},
dictionary matching~\cite{GawrychowskiS19},
and quasi-periodicity~\cite{GawrychowskiRS19}.

Bille et al.~\cite{Bille18} were the first to consider
the communication complexity of the
longest common prefix (LCP) problem in the setting where the two parties
have their strings in a compressed form,
i.e., represented by the \emph{Lempel-Ziv 77 factorization} (\emph{LZ77})~\cite{LZ77}
with/without self-references.
Bille et al.~\cite{Bille18} proposed
a randomized public-coin protocol for the LCP problem with
$O(\log z_\ell)$ communication rounds and $O(\log \ell)$ total bits of communication,
where $\ell$ denotes the length of the LCP of the two strings $A$ and $B$
and $z_{\ell}$ denotes the size of the \emph{non self-referencing} LZ77 factorization
of the LCP $A[1..\ell]$.
In addition, Bille et al.~\cite{Bille18} showed a randomized public-coin protocol
for the LCP problem with
\begin{enumerate}
      \item[(i)] $O(\log z_\ell' + \log \log \ell)$ communication rounds
            and $O(\log \ell)$ total bits of communication, or
      \item[(ii)] $O(\log z_\ell')$ communication rounds
            and $O(\log \ell + \log \log \log n)$ total bits of communication,
\end{enumerate}
where $z_\ell'$ denotes the size of the \emph{self-referencing} LZ77 factorization of
the LCP $A[1..\ell]$ and $n = |A|$.

In this paper, we consider the communication complexity of
the \emph{Hamming distance} of two strings of equal length,
which are represented in a compressed form.
We present a randomized public-coin protocol
            for a joint computation of the Hamming distance
            of two strings represented by \emph{non self-referencing} LZ77,
            with $O(d \log z)$ communication rounds and $O(d \log \ell_{\max})$ total bits of communication,
            where $d$ is the Hamming distance between $A$ and $B$,
            $z$ is the size of the LZ77 factorization of string $A$,
            and $\ell_{\max}$ is the largest gap between two adjacent mismatching positions
            between $A$ and $B$\footnote{If the first/last characters of $A$ and $B$ are equal, then we can add terminal symbols as $\#A\$$ and $\$B\#$ and subtract 2 from the computed distance.}.
            While our scheme is heavily based on Bille et al.'s LCP protocol,
            our complexity analysis is original which uses
            Crochemore's C-factorization~\cite{Crochemore84} and Rytter's AVL-grammar~\cite{Rytter03}.

Further, as a byproduct of our result for the Hamming distance problem,
we also show that LZ77 with/without self-references are \emph{non-monotonic}.
For a compression algorithm $\mathsf{A}$
let $\mathsf{A}(S)$ denote the size of the compressed representation of string $S$
by $\mathsf{A}$.
We say that compression algorithm $\mathsf{A}$ is \emph{monotonic}
if $\mathsf{A}(S[1..j]) \leq \mathsf{A}(S)$
for any $1 \leq j < |S|$
and $\mathsf{A}(S[i..|S|]) \leq \mathsf{A}(S)$
for any $1 < i \leq |S|$,
and we say it is \emph{non-monotonic} otherwise.
It is clear that LZ77 with/without self-references
satisfy the first property,
however, to our knowledge the second property has not been studied
for the LZ77 factorizations.
We prove that LZ77 with/without self-references is \emph{non-monotonic}
by giving a family of strings such that
removing each prefix of length from $1$ to $\sqrt{n}$
increases the number of factors in the LZ77 factorization by a factor of 4/3,
where $n$ denotes the string length.
We also show that in the worst-case
the number of factors in the non self-referencing LZ77 factorization
of any suffix of any string $S$ of length $n$
can be larger than that of $S$ by at most a factor of $O(\log n)$.

Monotonicity of compression algorithms
and string repetitive measures has gained recent attention.
Lagarde and Perifel~\cite{LagardeP18}
showed that \emph{Lempel-Ziv 78 compression}~\cite{LZ78} is non-monotonic
by showing that removing the first character of a string
can increase the size of the compression by a factor of $\Omega(\log n)$.
The recently proposed repetitive measure called the
\emph{substring complexity} $\delta$ is known to be monotonic~\cite{KociumakaNP20}.
Kociumaka et al.~\cite{KociumakaNP20} posed an open question
whether the smallest bidirectional macro scheme size $b$~\cite{StorerS82}
or the smallest string attractor size $\gamma$~\cite{KempaP18} is monotonic.
It was then answered by Mantaci et al.~\cite{MantaciRRRS21} that $\gamma$ is non-monotonic.

\section{Preliminaries}
\label{sec:preliminaries}

\subsection{Strings}

Let $\Sigma$ be an {\em alphabet} of size $\sigma$.
An element of $\Sigma^*$ is called a {\em string}.
The length of a string $S$ is denoted by $|S|$.
The empty string $\varepsilon$ is the string of length 0,
namely, $|\varepsilon| = 0$.
The $i$-th character of a string $S$ is denoted by
$S[i]$ for $1 \leq i \leq |S|$,
and the \emph{substring} of a string $S$ that begins at position $i$ and
ends at position $j$ is denoted by $S[i..j]$ for $1 \leq i \leq j \leq |S|$.
For convenience, let $S[i..j] = \varepsilon$ if $j < i$.
Substrings $S[1..j]$ and $S[i..|S|]$
are respectively called a \emph{prefix} and a \emph{suffix} of $S$.
For simplicity, let $S[..j]$
denote the prefix of $S$ ending at position $j$
and $S[i..]$ the suffix $S[i..|S|]$ of $S$ beginning at position $i$.
\hbnote*{fix}{
  A suffix $S[j..]$ with $j > 1$ is called a \emph{proper suffix} of $S$.
}

For string $X$ and $Y$,
let $\lcp(X, Y)$ denote the length of the
\emph{longest common prefix} (LCP) of strings $X,Y$,
namely, $\lcp(X, Y) = \max(\{\ell \mid X[..\ell] = Y[..\ell], 1 \leq \ell \leq \min\{|X|, |Y|\}\} \cup \{0\})$.
The \emph{Hamming distance} $\hd(X,Y)$ of two strings $X, Y$ of equal length
is the number of positions where the underlying characters
differ between $X$ and $Y$,
namely, $\hd(X,Y) = |\{i \mid X[i] \neq Y[i], 1 \leq i \leq |X|\}|$.

\subsection{Lempel-Ziv 77 factorizations}

Of many versions of Lempel-Ziv 77 factorization~\cite{LZ77}
which divide a given string in a greedy left-to-right manner,
the main tool we use is the non self-referencing LZ77,
which is formally defined as follows:

\begin{Definition}[Non self-referencing LZ77 factorization]
  The \emph{non self-referencing LZ77 factorization} of string $S$,
  denoted $\NLZ(S)$,
  is a factorization $S = f_1 \cdots f_\mathsf{zn}$ that satisfies
  the following:
  Let $u_i$ denote the beginning position of each factor $f_i$ in
  the factorization $f_1 \cdots f_\mathsf{zn}$,
  that is, $u_i = |f_1 \cdots f_{i-1}|+1$.
  (1) If $i > 1$ and $\max_{1 \leq j < u_i}\{\lcp(S[u_i..], S[j..u_i-1])\} \geq 1$,
  then for any position $s_i \in \argmax_{1 \leq j < u_i} \lcp(S[u_i..], S[j..u_i-1])$ in $S$, let $p_i = \lcp(S[u_i..], S[s_i..u_i-1])$.
  (2) Otherwise, let $p_i = 0$.
  Then, $f_i = S[s_i..u_i+p_i]$ for each $1 \leq i \leq \mathsf{zn}$.
\end{Definition}

Intuitively, each factor $f_i$ in $\NLZ(S)$ is either a fresh letter,
or the shortest prefix of $f_i \cdots f_\zn$
that does not have a previous occurrence in $f_1 \ldots f_{i-1}$.
This means that self-referencing is \emph{not} allowed in $\NLZ(S)$, namely,
no previous occurrences $S[s_i..s_i+p_i]$ of each factor $f_i$
can overlap with itself.

The \emph{size} $\zn(S)$ of $\NLZ(S)$ is the number $\zn$
of factors in $\NLZ(S)$.

We encode each factor $f_i$ by a triple $(s_i, p_i, \alpha_i) \in ([1..n] \times [1..n] \times \Sigma)$,
where $s_i$ is the left-most previous occurrence of $f_i$,
$p_i$ is the length of $f_i$, and $\alpha_i$ is the last character of $f_i$.

\begin{MyExample}
  For $S = \mathtt{abaababaabaabaabaabaabb}$,
  $\SLZ(S) = \mathtt{a \mid b \mid aa \mid bab \mid aabaa \mid baabaab \mid aabb \mid}$
  and it can be represented as
  $(0, 0, \mathtt{a}), (0, 0, \mathtt{b}), (1, 2, \mathtt{a}), (2, 3, \mathtt{b}), (3, 5, \mathtt{a}), (7, 7, \mathtt{b}), (3, 4, \mathtt{b})$.
  The size of $\SLZ(S)$ is 7.
\end{MyExample}

The self-referencing counterpart is defined as follows:

\begin{Definition}[Self-referencing LZ77 factorization]
  The \emph{self-referencing LZ77 factorization} of string $S$,
  denoted $\SLZ(S)$,
  is a factorization $S = g_1 \cdots g_\mathsf{zs}$ that satisfies
  the following:
  Let $v_i$ denote the beginning position of each factor $g_i$ in
  the factorization $g_1 \cdots g_\mathsf{zs}$,
  that is, $v_i = |g_1 \cdots g_{i-1}|+1$.
  (1) If $i > 1$ and $\max_{1 \leq j < v_i}\{\lcp(S[v_i..], S[j..])\} \geq 1$,
  then for any position $t_i \in \argmax_{1 \leq j < v_i} \lcp(S[v_i..], S[j..])$ in $S$, let $q_i = \lcp(S[v_i..], S[t_i..])$.
  (2) Otherwise, let $q_i = 0$.
  Then, $g_i = S[v_i..v_i+q_i]$ for each $1 \leq i \leq \mathsf{zs}$.
\end{Definition}

Intuitively, each factor $g_i$ of $\SLZ(S)$ is either a fresh letter,
or the shortest prefix of $g_i \cdots g_\zs$
that does not have a previous occurrence beginning in $g_1 \cdots g_{i-1}$.
This means that self-referencing is allowed in $\SLZ(S)$, namely,
the left-most previous occurrence with smallest $t_i$
of each factor $g_i$ may overlap with itself.

The \emph{size} $\zs(S)$ of $\SLZ(S)$ is the number $\zs$ of factors in $\SLZ(S)$.

Likewise, we encode each factor $g_i$ by a triple $(t_i, q_i, \beta_i) \in ([1..n] \times [1..n] \times \Sigma)$,
where $t_i$ is the left-most previous occurrence of $g_i$,
$q_i$ is the length of $g_i$, and $\beta_i$ is the last character of $g_i$.

\begin{MyExample}
  For $S = \mathtt{abaababaabaabaabaabaabb}$,
  $\NLZ(S) = \mathtt{a \mid b \mid aa \mid bab \mid aabaa \mid baabaabaabb \mid}$
  and it can be represented as
  $(0, 0, \mathtt{a}), (0, 0, \mathtt{b}), (1, 2, \mathtt{a}), (2, 3, \mathtt{b}), (3, 5, \mathtt{a}), (7, 11, \mathtt{b})$.
  The size of $\SLZ(S)$ is 6.
\end{MyExample}

\subsection{Communication complexity model}
Our approach is based on
the standard communication complexity model of Yao~\cite{Yao79}
between two parties:
\begin{itemize}
  \item The parties are Alice and Bob;
  \item The problem is a function $f: X \times Y \rightarrow Z$
        for arbitrary sets $X, Y, Z$;
  \item Alice has instance $x \in X$ and Bob has instance $y \in Y$;
  \item The goal of the two parties is to output $f(x, y)$ for a pair $(x, y)$ of instances by a joint computation;
  \item The joint computation (i.e. the communication between Alice and Bob) follows a specified protocol $\mathcal{P}$.
\end{itemize}
The communication complexity~\cite{Yao79} usually refers merely to
the total amount of bits that need to be transferred between Alice and Bob
to compute $f(x, y)$.
In this paper, we follow Bille et al.'s model~\cite{Bille18} where
the communication complexity is evaluated by
a pair $\langle r, b \rangle$ of the number of communication rounds $r$
and the total amount of bits $b$ exchanged in the communication.

In a (Monte-Carlo) randomized public-coin protocol,
each party (Alice/Bob) can access a shared infinitely long
sequence of independent random coin tosses.
The requirement is that
the output has to be correct for every pair of inputs
with probability at least $1 - \epsilon$ for some
$0 < \epsilon < 1/2$,
which is based on the shared random sequence
of coin tosses.
We remark that one can amplify the error rate to an arbitrarily small constant
by paying a constant factor penalty
in the communication complexity.
Note that the public-coin model differs from a randomized private-coin model,
where in the latter the parties do not share a common random sequence
and they can only use their own random sequence.
In a deterministic protocol,
every computation is performed without random sequences.

\subsection{Joint computation of compressed string problems}

In this paper, we also consider
the communication complexity of the Hamming distance problem
between two compressed strings of equal length,
which are compressed by LZ77 without self-references.

\begin{Problem}[Hamming distance with non self-referencing LZ77]\label{prob:HD_NLZ}
  \quad
  \vspace*{-2mm}
  \begin{description}
    \item[Alice's input:] $\NLZ(A)$ for string $A$ of length $n$.
    \item[Bob's input:] $\NLZ(B)$ for string $B$ of length $n$.
    \item[Goal:] Both Alice and Bob obtain the value of $\hd(A, B)$.
  \end{description}
\end{Problem}

The following LCP problem for two strings
compressed by non self-referencing LZ77 
has been considered by Bille et al.~\cite{Bille18}.



\begin{Problem}[LCP with non self-referencing LZ77]\label{prob:LCP_NLZ}
    \quad
    \vspace*{-2mm}
    \begin{description}
      \item[Alice's input:] $\NLZ(A)$ for string $A$.
      \item[Bob's input:] $\NLZ(B)$ for string $B$.
      \item[Goal:] Both Alice and Bob obtain the value of $\lcp(A, B)$.
    \end{description}
\end{Problem}



Bille et al. proposed the following protocol for
a joint computation of the LCP of two strings
compressed by non self-referencing LZ77:

\begin{Theorem}[\cite{Bille18}] \label{theo:LCP_NLZ}
  Suppose that the alphabet $\Sigma$
  and the length $n$ of string $A$ are known to both Alice and Bob.
  Then, there exists a randomized public-coin protocol
  which solves Problem~\ref{prob:LCP_NLZ} with
  communication complexity $\langle O(\log z_\ell), O(\log \ell) \rangle$,
  where $\ell = \lcp(A, B)$ and $z_\ell = \zn(A[1..\ell])$.
\end{Theorem}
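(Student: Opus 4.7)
The plan is to adapt the classical randomized LCP protocol based on Karp--Rabin fingerprints so that the search proceeds along the LZ77 factor boundaries of Alice's string rather than along individual positions, reducing the number of rounds from $O(\log \ell)$ to $O(\log z_\ell)$. Using the public random tape, Alice and Bob first agree on a family of Karp--Rabin hash functions: comparing two substrings then reduces to exchanging one fingerprint, with collision probability made polynomially small by choosing the modulus sufficiently large. Throughout the protocol only substrings of length at most $\ell$ are ever hashed, so each fingerprint can be encoded in $O(\log \ell)$ bits.

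The first stage of the protocol localizes the smallest factor index $k$ of $\NLZ(A)$ such that $A[1..\,|f_1\cdots f_k|]$ disagrees with the corresponding prefix of $B$. Alice and Bob run an exponential search on the factor index: for $i=0,1,2,\ldots$ Alice sends her fingerprint of the prefix $A[1..\,|f_1\cdots f_{2^i}|]$ and Bob replies with a single bit indicating agreement. Once a mismatch is first observed, a standard binary search inside the current doubling interval of factor indices pins down the exact factor $f_k$. Because $k \leq z_\ell$, this stage consists of $O(\log z_\ell)$ rounds.

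The second stage locates the mismatch inside $f_k$ itself. If $f_k$ is a fresh character the answer is immediate. Otherwise $f_k$ is a copy of an earlier substring of $A$ lying entirely within $f_1\cdots f_{k-1}$, and is therefore covered by strictly fewer than $k$ factors; the desired mismatch position inside the copy corresponds to a mismatch between that earlier substring and a suitably shifted prefix of $B$, so we recurse on the shorter instance. The recursion depth is $O(\log z_\ell)$ because the covering factor count strictly decreases at each level.

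The main obstacle is the bit accounting: obtaining $O(\log \ell)$ total bits over $O(\log z_\ell)$ fingerprint exchanges requires choosing the fingerprint moduli adaptively so that the per-round bit cost telescopes along the geometric sequence of candidate lengths used by the exponential and binary searches, while simultaneously keeping a constant error bound via a union bound over all comparisons. Once this calibration is in place, the exchanged length values and one-bit responses contribute only lower-order terms, yielding the claimed $\langle O(\log z_\ell), O(\log \ell)\rangle$ bound.
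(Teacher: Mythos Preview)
Your first stage is close to what Bille et al.\ do, with one cosmetic difference: they fingerprint the \emph{encoded factor sequences} (strings over the triple alphabet $[1..n]\times[1..n]\times\Sigma$) rather than prefixes of the original strings at factor boundaries.  Since the non self-referencing LZ77 factorization is canonical, the first $k-1$ triples of $\NLZ(A)$ and $\NLZ(B)$ coincide exactly when $A$ and $B$ agree up to the end of factor $k-1$, so both variants locate the same index $k$ in $O(\log z_\ell)$ rounds.

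Your second stage, however, misses the key observation that makes the round bound work, and the depth argument you give is not correct as stated.  Once the first mismatching factor index $k$ is known, Alice simply sends her triple $(s_k,p_k,\alpha_k)$ to Bob.  Because the factorization is \emph{non self-referencing}, the source $A[s_k..s_k+p_k-1]$ lies entirely inside $f_1\cdots f_{k-1}$, which equals $B[1..u_k-1]$; hence Bob can reconstruct Alice's factor $f_k$ from his own string as $B[s_k..s_k+p_k-1]\cdot\alpha_k$ and compare it locally with $B[u_k..u_k+p_k]$ to find $\ell$.  No recursion, and only $O(1)$ further rounds, are needed.  Your recursive scheme overlooks that Bob already holds the source substring, and your justification ``the covering factor count strictly decreases'' yields only a depth bound of $z_\ell$, not $O(\log z_\ell)$: a strictly decreasing sequence of positive integers starting at $k\le z_\ell$ can have length $z_\ell$, and nothing in the LZ77 structure forces the count to halve.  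So the round bound in your proposal is not established, whereas the one-shot ``send the triple'' step gives it immediately.
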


The basic idea of Bille et al.'s protocol~\cite{Bille18} is as follows:
In their protocol, the sequences of factors in
the non self-referencing LZ77 factorizations $\NLZ(A)$ and $\NLZ(B)$ are regarded as
strings of respective lengths $\zn(A)$ and $\zn(B)$
over an alphabet $[1..n] \times [1..n] \times \Sigma$.
Then, Alice and Bob jointly compute the LCP of $\NLZ(A)$ and $\NLZ(B)$,
which gives them the first mismatching factors between $\NLZ(A)$ and $\NLZ(B)$.
This LCP of $\NLZ(A)$ and $\NLZ(B)$ is computed by a randomized
protocol for doubling-then-binary searches with
$O(\log z_\ell)$ communication rounds.
Finally, Alice sends the information about her first mismatching factor to Bob, and
he internally computes the LCP of $A$ and $B$.
The total number of bits exchanged is bounded by $O(\log \ell)$.

In Section~\ref{sec:hamming}, we present our protocol for Problem~\ref{prob:HD_NLZ}
of jointly computing the Hamming distance of two strings
compressed by \emph{non self-referencing LZ77}.
The scheme itself is a simple application of the LCP protocol of Theorem~\ref{theo:LCP_NLZ}
for \emph{non self-referencing LZ77},
but our communication complexity analysis is based on
non-trivial combinatorial properties of LZ77 factorization
which, to our knowledge, were not previously known.

\section{Compressed communication complexity of Hamming distance}
\label{sec:hamming}

In this section we show a Monte-Carlo randomized
protocol for
\hbnote*{fixed}{
  Problem~\ref{prob:HD_NLZ}
}
that asks for the Hamming distance $\hd(A, B)$ of strings $A$ and $B$
that are compressed by non self-referencing LZ77.
Our protocol achieves $\langle O(d \log z), O(d \log \ell_{\max}) \rangle$
communication complexity,
where $d = \hd(A, B)$, $z = \zn(A)$, and $\ell_{\max}$ is the largest value
returned by the sub-protocol of the LCP problem for two strings
compressed by non self-referencing LZ77.

The basic idea is to apply the so-called \emph{Kangaroo jumping} method,
namely, if $d$ is the number of mismatching positions
between $A$ and $B$, then one can compute $d = \hd(A, B)$
with at most $d+1$ LCP queries.
More specifically, let $1 \leq i_1 < \cdots < i_d \leq n$
be the sequence of mismatching positions between $A$ and $B$.
By using the protocol of Theorem~\ref{theo:LCP_NLZ} as a black-box,
and also using the fact that $\zn(S) \geq \zn(S[1..j])$
for any prefix $S[1..j]$ of any string $S$,
we immediately obtain the following:

\begin{Lemma} \label{lem:HD_basic_lemma}
  Suppose that the alphabet $\Sigma$
  and the length $n$ of strings $A$ and $B$ are known to both Alice and Bob.
  Then, there exists a randomized public-coin protocol
  which solves Problem~\ref{prob:HD_NLZ} with
  communication complexity
  $\langle O(\sum_{k = 1}^{d} \log \zn(A[i_k+1..])), O(d \log \ell_{\max}) \rangle$,
  where $\ell_{\max} = \max_{1 < k \leq d} \{i_{k} - i_{k-1}+1\}$.
\end{Lemma}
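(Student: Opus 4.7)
My plan is to organize the protocol as an iterated application of the LCP sub-protocol from Theorem~\ref{theo:LCP_NLZ}, following the \emph{kangaroo jumping} template. Setting $i_0 = 0$, in stage $k$ ($k = 0, 1, \ldots$) Alice and Bob jointly compute $\ell_{k+1} = \lcp(A[i_k+1..], B[i_k+1..])$ via the sub-protocol. If $\ell_{k+1} < n - i_k$, both parties learn the position of the next mismatch, namely $i_{k+1} = i_k + \ell_{k+1} + 1$, and advance to stage $k+1$; otherwise they conclude $\hd(A,B) = k$ and terminate. Because Alice (resp.\ Bob) can losslessly decompress $A$ (resp.\ $B$) from $\NLZ(A)$ (resp.\ $\NLZ(B)$), each party can prepare the input to each sub-protocol call, namely the non-self-referencing LZ77 factorization of the relevant suffix, using only its own local data, so no extra communication is needed to set up successive stages. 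After at most $d+1$ stages the protocol halts with $d$ known to both parties.

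For the complexity analysis, Theorem~\ref{theo:LCP_NLZ} charges $\langle O(\log \zn(A[i_k+1..i_k+\ell_{k+1}])), O(\log \ell_{k+1}) \rangle$ to stage $k$. The length-$\ell_{k+1}$ LCP is a prefix of the suffix $A[i_k+1..]$, and $\zn$ is monotonic on prefixes (i.e.\ $\zn(S[1..j]) \le \zn(S)$ for any $j$), so I can upper bound the per-stage round count by $O(\log \zn(A[i_k+1..]))$. Each per-stage LCP length $\ell_{k+1}$ is at most the adjacent-mismatch gap $i_{k+1} - i_k - 1$, and (using the sentinel trick from the footnote to handle the prefix before the first mismatch and the suffix after the last one) is therefore bounded by $\ell_{\max}$. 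Summing over the $d+1$ stages yields $O\!\bigl(\sum_{k=0}^{d}\log\zn(A[i_k+1..])\bigr)$ total rounds and $O(d\log\ell_{\max})$ total bits, which matches the stated bound up to an indexing shift.

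There is no genuine obstacle in this proof: the heavy lifting is already done by Theorem~\ref{theo:LCP_NLZ}, and what remains is just the kangaroo bookkeeping. The two points that deserve care are (i) making sure all $d+1$ sub-protocol calls are counted, namely the $d$ calls that successively discover mismatches plus the final confirmation call that certifies no further mismatches exist; and (ii) verifying that at the beginning of every stage each party can independently compute the $\NLZ$ of the required suffix from its own local input, so that the preconditions of Theorem~\ref{theo:LCP_NLZ} are satisfied at every iteration without any additional communication.
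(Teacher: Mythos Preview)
Your proposal is correct and follows essentially the same approach as the paper: kangaroo jumping via repeated calls to the LCP sub-protocol of Theorem~\ref{theo:LCP_NLZ}, combined with the prefix-monotonicity of $\zn$ to bound each stage's round count by $O(\log \zn(A[i_k+1..]))$. You supply more detail than the paper (which states the lemma as an immediate consequence), including the explicit observation that each party can locally recompute $\NLZ$ of the required suffix and the handling of the extra $k=0$ stage; the indexing shift you note between your sum $\sum_{k=0}^{d}$ and the paper's $\sum_{k=1}^{d}$ is harmless in the big-$O$ sense.
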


\subsection{On the sizes of non self-referencing LZ77 factorization of suffixes}

Our next question is how large the $\zn(A[i_k+1..])$ term in Lemma~\ref{lem:HD_basic_lemma} can be in comparison to $\zn(A)$.
To answer this question, we consider the following general measure:
For any string of length $n$,
let
\[
  \zeta(n) = \max\{ \zn(S[i..]) / \zn(S) \mid S \in \Sigma^n, 1 < i \leq n\}.
\]

\subsubsection{Lower bound for $\zeta(n)$}

In this subsection, we present a family of strings $S$ such that 
$\zn(S[i..]) > \zn(S)$ for some suffix $S[i..]$,
namely $\zeta(n) > 1$.
More specifically, we show the following:

\begin{Lemma} \label{lem:zeta_lower_bound}
  $\zeta(n)$ is asymptotically lower bounded by $4/3$.
\end{Lemma}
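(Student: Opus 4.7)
The plan is to exhibit an explicit family of strings $\{S_n\}$ together with a shift $i \in [1,\sqrt{n}]$ such that $\zn(S_n[i+1..]) \geq (4/3 - o(1))\,\zn(S_n)$, which immediately yields the asymptotic bound $\zeta(n) \geq 4/3$. Since the introduction already advertises that \emph{every} prefix of length between $1$ and $\sqrt{n}$ blows up the factor count by a factor of $4/3$, the construction will be chosen so that the bad behavior is robust to the exact value of $i$, rather than depending on a single lucky alignment.

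First I would design $S_n$ as a concatenation of $\Theta(\sqrt{n})$ structurally identical blocks, each of length $\Theta(\sqrt{n})$, built so that once the alphabet and a few primitive patterns have been ``seeded'' by the initial block, the greedy non self-referencing LZ77 parser can cover each subsequent block with exactly $3$ factors: two long factors that copy previously seen substrings plus one trailing fresh symbol. Such a block structure leads to $\zn(S_n) = 3\sqrt{n} + O(\sqrt{n})$, and the seeding prefix naturally has length $O(\sqrt{n})$, matching the range of shifts claimed in the introduction.

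Next, I would analyze $\zn(S_n[i+1..])$ for any $i$ in the prescribed range. The key geometric observation is that shifting the string by $i$ positions desynchronizes each block boundary with respect to the earliest previous occurrence of its content. In the shifted string, the source substring that the greedy parser would have used for one of the two long factors per block now straddles the removed prefix, so the parser is forced to split that factor, producing exactly one extra factor per block. This gives $\zn(S_n[i+1..]) = 4\sqrt{n} - O(\sqrt{n})$, and the ratio tends to $4/3$ as $n \to \infty$.

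The main obstacle is verifying a ``uniqueness of source'' property: one must show that, after the shift, \emph{no} earlier occurrence of the relevant long substring remains available to the greedy parser, so that the split is truly unavoidable. This requires arguing that all candidate sources either contain a mismatch at the critical position or lie after the current position, and this is where the internal structure of the block must be chosen carefully (for example, so that the long substrings in question occur at only a small, controlled set of positions within the non-shifted string). Once this combinatorial property is established, the per-block counts of $3$ versus $4$ translate directly into the $4/3$ lower bound on $\zeta(n)$.
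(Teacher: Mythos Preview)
Your plan has the right architecture---a short ``seeding'' prefix followed by $\Theta(\sqrt{n})$ structurally identical blocks whose factor count jumps after a small left-shift---and the arithmetic $4/3$ target is identified correctly. But as written it is only a plan: you never exhibit the string, and the crucial step you flag yourself (``uniqueness of source'') is exactly the step that requires a concrete construction to verify. Without the explicit $S_n$, there is no way to check that the greedy parser really produces $3$ factors per block before the shift and $4$ after; this is where such constructions typically fail.

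The paper's proof is both simpler and more explicit than what you outline. Over the alphabet $\{0,1,\ldots,\sigma\}$ it takes
\[
S = (0\,1\,2\cdots\sigma)(0\,1\,2\,4)(0\,1\,2\,3\,4\,6)(0\,1\,2\,3\,4\,5\,6\,8)\cdots(0\,1\,2\cdots(\sigma{-}2)\,\sigma),
\]
so the seed is the entire alphabet as $\sigma{+}1$ singleton factors, and each subsequent block parses as a \emph{single} factor of $\NLZ(S)$ (it is a run $0\,1\cdots 2k$ already seen, followed by the fresh tail $2k{+}2$). Deleting the first character $0$ destroys the only previous occurrence of $0\,1\cdots 2k$ long enough to cover the block, so in $\NLZ(S[2..])$ every block splits into exactly two factors. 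Thus $\zn(S)=\tfrac{3\sigma}{2}$ and $\zn(S[2..])=2\sigma-2$, giving the ratio $4/3 - o(1)$. Compared with your intended $3\to 4$ per block, the paper's $1\to 2$ mechanism (with the seed carrying the bulk of the factors) makes the ``no alternative source'' verification a one-line observation rather than the obstacle you anticipate. The extension to shifts up to roughly $\sqrt{n}$ is handled separately, by padding each $0$ to $0^h$; you may find it cleaner to first nail down the $i=1$ case with an explicit string and then argue the robustness to larger $i$ as a second step.
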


\begin{proof}
  For simplicity, we consider an integer alphabet $\{0, 1, \ldots, \sigma \}$
  of size $\sigma +1$.
  Consider the string
  \[
    S = (0 1 2 \cdots \sigma - 1 \ \ \sigma) (0 1 2 4) (0 1 2 3 4 6) (0 1 2 3 4 5 6 8)  \cdots (0 1 2 \cdots \sigma -2 \ \ \sigma)
  \]
  and its proper suffix
  \[
    S[2..] = (1 2 \cdots \sigma - 1 \ \ \sigma) (0 1 2 4) (0 1 2 3 4 6) (0 1 2 3 4 5 6 8)  \cdots (0 1 2 \cdots \sigma -2 \ \ \sigma).
  \]
  The non self-referencing LZ77 factorization of $S$ and $S[2..]$ are:
  \begin{eqnarray*}
    \NLZ(S) & = & 0 \mid 1 \mid 2 \mid \cdots \mid \sigma - 1 \mid \sigma \mid 0 1 2 4 \hspace*{3mm} \mid 0 1 2 3 4 6 \hspace*{3mm} \mid 0 1 2 3 4 5 6 8 \hspace*{3mm} \mid \cdots \mid 0 1 2 \cdots \sigma -2 \ \ \sigma \hspace*{2mm} \mid \\
    \NLZ(S[2..]) & = & \hspace*{4.5mm} 1 \mid 2 \mid \cdots \mid \sigma - 1 \mid \sigma \mid 0 1 \mid 2 4 \mid 0 1 2 3 \mid 4 6 \mid 0 1 2 3 4 5 \mid 6 8 \mid \cdots \mid 0 1 2 \cdots \mid \sigma -2 \ \ \sigma \mid
  \end{eqnarray*}
  Observe that after the first occurrence of character $\sigma$,
  each factor of $\NLZ(S)$ is divided into two smaller factors in $\NLZ(S[2..])$.
  Since $\zn(S) = |\NLZ(S)| = (\sigma+1) + (\frac{\sigma}{2} - 1) = \frac{3\sigma}{2}$ and
  $\zn(S[2..]) = |\NLZ(S[2..])| = (\sigma) + (\sigma-2) = 2 \sigma - 2$,
  $\zn(S[2..])/\zn(S) = \frac{2 \sigma - 2}{(3\sigma / 2)} = \frac{4}{3} - \frac{2}{3\sigma}$, which tends to $4/3$ as $\sigma$ goes to infinity.
  We finally remark that $|S| = n = \Theta(\sigma^2)$
  which in turn means that $\sigma = \Theta(\sqrt{n})$.
\end{proof}

\begin{MyRemark} \label{rmk:zeta_lower_bound}
  One can generalize the string $S$ of Lemma~\ref{lem:zeta_lower_bound}
  by replacing $0$ with $0^h$ for arbitrarily fixed $1 < h \leq a \cdot \sigma$
  for any constant $a$.
  The upper limit $a \cdot \sigma$ comes from the fact that
  the number of $0$'s in the original string $S$ is exactly
  $\frac{\sigma}{2}$.
  Since $|S| = n = \Theta(\sigma^2)$,
  replacing $0$ by $0^{h}$ with $h < a \cdot \sigma$
  keeps the string length within $O(n)$.
  This implies that one can obtain the asymptotic lower bound
  $4/3$ for any suffix $S[h..]$ of length roughly up to $n - \sqrt{n}$.
\end{MyRemark}

Note also that the factorizations shown in Lemma~\ref{lem:zeta_lower_bound}
coincide with the self-referencing counterparts $\SLZ(S)$ and $\SLZ(S[2..])$,
respectively.
The next corollary immediately follows from Lemma~\ref{lem:zeta_lower_bound}
and Remark~\ref{rmk:zeta_lower_bound}.
\begin{Corollary}
  The Lempel-Ziv 77 factorization with/without self-references is non-monotonic.
\end{Corollary}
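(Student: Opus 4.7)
The plan is to read off both required inequalities directly from Lemma~\ref{lem:zeta_lower_bound} and the observation made just before the corollary, since the corollary is essentially a one-line consequence of what has already been proved.

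First I would unpack the definition of monotonicity: to show that $\NLZ$ (resp.\ $\SLZ$) is non-monotonic I need to exhibit some string and some index for which either the prefix inequality or the suffix inequality in the definition fails. The prefix inequalities $\zn(S[..j]) \leq \zn(S)$ and $\zs(S[..j]) \leq \zs(S)$ are taken for granted in the paper, and indeed they hold because the factorization of $S$ truncated at position $j$ (with its trailing factor shortened if necessary) is itself a valid LZ77 factorization of $S[..j]$ with at most as many factors. So all of the work lies in violating the suffix inequality.

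Next I would invoke Lemma~\ref{lem:zeta_lower_bound}, which constructs, for every sufficiently large $\sigma$, a string $S$ over $\{0, 1, \ldots, \sigma\}$ with
\[
  \frac{\zn(S[2..])}{\zn(S)} \;=\; \frac{4}{3} - \frac{2}{3\sigma}.
\]
This ratio strictly exceeds $1$ as soon as $\sigma \geq 3$, so $\zn(S[2..]) > \zn(S)$, which immediately shows that $\NLZ$ is non-monotonic.

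For the self-referencing case I would quote the remark made right before the corollary: for the particular family of Lemma~\ref{lem:zeta_lower_bound}, the factorizations displayed coincide with $\SLZ(S)$ and $\SLZ(S[2..])$ respectively, because no factor in the construction has a source occurrence that overlaps itself. Hence $\zs(S) = \zn(S)$ and $\zs(S[2..]) = \zn(S[2..])$, and combining with the inequality above yields $\zs(S[2..]) > \zs(S)$, so $\SLZ$ is non-monotonic as well. The only point that demands a little care, and which I would double-check explicitly, is that the exhibited factorizations really are valid under both LZ77 definitions, i.e., that each factor's leftmost previous occurrence lies strictly to the left of the factor itself; but this is a direct inspection of the blocks written out in the proof of Lemma~\ref{lem:zeta_lower_bound} and poses no genuine obstacle.
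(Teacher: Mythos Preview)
Your proposal is correct and mirrors the paper's own argument: the corollary is deduced directly from Lemma~\ref{lem:zeta_lower_bound} together with the observation (stated just before the corollary) that the displayed factorizations coincide with their self-referencing counterparts. The paper additionally cites Remark~\ref{rmk:zeta_lower_bound}, but that only extends the construction to longer removed prefixes and is not needed for the bare non-monotonicity statement, so your omission of it is harmless.
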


\subsubsection{Upper bound for $\zeta(n)$}

Next, we consider an upper bound for $\zeta(n)$.
The tools we use here are the \emph{C-factorization}~\cite{Crochemore84}
without self-references,
and a grammar compression called \emph{AVL-grammar}~\cite{Rytter03}.

\begin{Definition}[Non self-referencing C-factorization]
  The \emph{non self-referencing C-factorization} of string $S$,
  denoted $\CN(S)$,
  is a factorization $S = c_1 \cdots c_\mathsf{cn}$ that satisfies the following:
  Let $w_i$ denote the beginning position of each factor $c_i$ in
  the factorization $c_1 \cdots c_\mathsf{cn}$,
  that is, $w_i = |c_1 \cdots c_{i-1}|+1$.
  (1) If $i > 1$ and $\max_{1 \leq j < w_i}\{\lcp(S[w_i..], S[j..w_i-1])\} \geq 1$,
  then for any position $r_i \in \argmax_{1 \leq j < w_i} \lcp(S[w_i..], S[j..w_i-1])$ in $S$, let $y_i = \lcp(S[w_i..], S[r_i..w_i-1])-1$.
  (2) Otherwise, let $y_i = 0$.
  Then, $c_i = S[w_i..w_i+y_i]$ for each $1 \leq i \leq \mathsf{cn}$.
\end{Definition}
The \emph{size} $\cn(S)$ of $\CN(S)$ is the number $\cn$
of factors in $\CN(S)$.

\begin{MyExample}
  For $S = \mathtt{abaababaabaabaabaabaabb}$,
  $\CN(S) = \mathtt{a \mid b \mid a \mid ab \mid abaab \mid aaba \mid abaabaab \mid b \mid}$
  and its size is $8$.
\end{MyExample}

The difference between $\NLZ(S)$ and $\CN(S)$ is that
while each factor $f_i$ in $\NLZ(S)$ is the shortest prefix of $S[u_i..]$
that does not occur in $S[1..u_i-1]$,
each factor $c_i$ in $\CN(S)$ is the longest prefix of $S[w_i..]$
that occurs in $S[1..w_i-1]$.
This immediately leads to the next lemma.

\begin{Lemma} \label{lem:cn_larger_than_zn}
  For any string $S$, $\cn(S) \geq \zn(S)$.
\end{Lemma}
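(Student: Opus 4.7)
The plan is to prove the stronger inductive claim that after parsing $k$ factors, the non self-referencing LZ77 parse has consumed at least as many characters of $S$ as the non self-referencing C-factorization; that is, $U_k := |f_1 \cdots f_k| \ge |c_1 \cdots c_k| =: W_k$ for every $k \ge 0$. Once this is established, specializing to $k = \cn(S)$ gives $U_k \ge W_k = |S|$, which forces $\zn(S) \le \cn(S)$.

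For the inductive step, assume $U_k \ge W_k$ and set $u = U_k + 1$, $w = W_k + 1$. If the $\CN$ factor $c_{k+1}$ ends at or before position $U_k$, then $W_{k+1} \le U_k \le U_{k+1}$ and we are done. Otherwise $c_{k+1}$ overshoots position $U_k$, and the crucial step is to exploit the non-self-referencing condition built into the definition of $\CN$: the source occurrence $S[r_{k+1}..\,r_{k+1} + |c_{k+1}| - 1]$ lies entirely inside $S[1..w-1] \subseteq S[1..u-1]$. Restricting this copy to the positions aligned with $S[u..w + |c_{k+1}| - 1]$ then exhibits the length-$(|c_{k+1}| - (u - w))$ prefix of $S[u..]$ as occurring inside $S[1..u-1]$, so by the definition of $\NLZ$ one obtains $|f_{k+1}| \ge |c_{k+1}| - (u - w) + 1$, and a direct computation yields $U_{k+1} \ge w + |c_{k+1}| > W_{k+1}$.

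The one subtlety is the degenerate overshoot case in which $c_{k+1}$ is a fresh-letter factor of length $1$: here $W_{k+1} > U_k$ forces $U_k = W_k$, hence $u = w$, and the letter $S[u] = S[w]$ is also fresh with respect to $\NLZ$, producing $|f_{k+1}| = 1 = |c_{k+1}|$. The main obstacle is the overshoot case above; one has to transplant the source occurrence of $c_{k+1}$ from the $\CN$ frame at position $w$ into the $\NLZ$ frame at position $u$, and the non-self-referencing condition on $\CN$ is precisely what guarantees that the transplanted copy still lies inside $S[1..u-1]$ and can therefore be used as a witness for the $\NLZ$ factor $f_{k+1}$.
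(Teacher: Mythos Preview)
Your proof is correct. The paper does not give an explicit argument for this lemma; it simply observes that an $\NLZ$ factor starting at position $p$ is the shortest prefix of $S[p..]$ \emph{not} occurring in $S[1..p-1]$ while a $\CN$ factor starting at $p$ is the longest prefix that \emph{does} occur there, and declares the lemma immediate from this. Your inductive invariant $U_k \ge W_k$ is the standard way to make that intuition rigorous, and your transplantation of the $\CN$ source occurrence into the $\NLZ$ frame---using that the source of $c_{k+1}$ lies entirely in $S[1..w-1] \subseteq S[1..u-1]$ thanks to the non-self-referencing constraint---is exactly the right mechanism.

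One cosmetic point: the induction naturally terminates at $k = \zn(S)$, where $U_{\zn(S)} = |S| \ge W_{\zn(S)}$ already forces $\cn(S) \ge \zn(S)$ (since the $W_k$ are strictly increasing and $W_{\cn(S)} = |S|$). Specializing instead to $k = \cn(S)$, as you phrase it, tacitly assumes $f_{k+1}$ exists for all $k < \cn(S)$, which is not yet established; stopping at $k = \zn(S)$ avoids this and is cleaner.
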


We also use the next lemma in our upper bound analysis for $\zeta(n)$.

\begin{Lemma} \label{lem:cn_not_larger_than_2zn}
  For any string $S$, $\cn(S) \leq 2\zn(S)$.
\end{Lemma}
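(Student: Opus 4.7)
The plan is to map each $\CN$ factor to the $\NLZ$ factor containing its starting position, and to show that no $\NLZ$ factor receives more than two $\CN$ factors. Summing over all $\zn(S)$ many $\NLZ$ factors then yields the claimed bound $\cn(S) \leq 2\zn(S)$.

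Fix an $\NLZ$ factor $f_i = S[u_i..u_i+p_i]$. If $p_i = 0$ then $f_i$ is a single character, and trivially at most one $\CN$ factor starts in $f_i$. Otherwise, the non-self-referencing LZ77 definition supplies a source position $s_i$ with $S[u_i..u_i+p_i-1] = S[s_i..s_i+p_i-1]$ and $s_i+p_i-1 < u_i$; it is this strict inequality, the non-self-referencing condition, that powers the whole argument.

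The heart of the proof is the following claim: any $\CN$ factor that starts at a position $w$ with $u_i \leq w \leq u_i+p_i-1$ must extend at least up to position $u_i+p_i-1$. Indeed, the substring $S[w..u_i+p_i-1]$ has the shifted occurrence $S[s_i+(w-u_i)..s_i+p_i-1]$, which ends at $s_i+p_i-1 < u_i \leq w$ and is therefore a valid non-self-referencing previous occurrence from the viewpoint of $w$. Because $\CN$ chooses the longest previously-occurring prefix at each start, the $\CN$ factor beginning at $w$ has length at least $u_i+p_i-w$, hence reaches position $u_i+p_i-1$ or beyond. Consequently at most one $\CN$ factor can start in $[u_i, u_i+p_i-1]$, and at most one further $\CN$ factor may start at the remaining position $u_i+p_i$, giving at most two $\CN$ starts per $\NLZ$ factor. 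The step most in need of care is tracking the non-self-referencing condition across the shift from source $s_i$ to source $s_i+(w-u_i)$, but this reduces to the observation $s_i+p_i-1 < u_i \leq w$ already noted above.
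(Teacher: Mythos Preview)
Your proof is correct and follows essentially the same approach as the paper: both arguments exploit that the prefix $f_j[..|f_j|-1]$ of an $\NLZ$ factor has a non-self-referencing previous occurrence, which forces any $\CN$ factor starting in that range to extend at least to position $u_j+p_j-1$, leaving room for at most one more $\CN$ start at the final position $u_j+p_j$. Your presentation via the explicit map ``$\CN$ factor $\mapsto$ $\NLZ$ factor containing its start'' is a bit more direct than the paper's contradiction phrasing, but the underlying combinatorics is identical.
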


\begin{proof}
  \sinote*{removed ``on the contrary''}{%
    Suppose that
  }%
  there are two consecutive factors $c_i, c_{i+1}$ of $\CN(S)$
  and a factor $f_j$ of $\NLZ(S)$
  such that $c_i, c_{i+1}$ are completely contained in $f_i$
  and the ending position of $c_{i+1}$ is less than
  the ending position of $f_j$.
  Since $c_{i}c_{i+1}$ is a substring of $f_j[..|f_j|-1]$
  and $f_j[..|f_j|-1]$ has a previous occurrence in
  $f_1 \cdots f_{j-1}$,
  this contradicts that $c_{i}$ terminated inside $f_j[..|f_j|-1]$.

  Thus the only possible case is that
  $c_{i}c_{i+1}$ occurs as a suffix of $f_j$.
  Note that in this case $c_{i-1}$ cannot occur inside $f_j$
  by the same reasoning as above.
  Therefore, at most two consecutive factors of $\CN(S)$
  can occur completely inside of each factor of $\NLZ(S)$.
  This leads to $\cn(S) \leq 2\zn(S)$.
\end{proof}

An AVL-grammar of a string $S$ is a kind of a
\emph{straight-line program} (\emph{SLP}),
which is a context-free grammar in the Chomsky-normal form which generates only $S$.
The parse-tree of the AVL-grammar is an AVL-tree~\cite{AVL62}
and therefore,
its height is $O(\log n)$ if $n$ is the length of $S$.
Let $\avl(S)$ denote the size (i.e. the number of productions)
in the AVL-grammar for $S$.
Basically, the AVL-grammar for $S$ is constructed from
the C-factorization of $S$,
by introducing at most $O(\log n)$ new productions for each
factor in the C-factorization.
Thus the next lemma holds.

\begin{Lemma}[\cite{Rytter03}] \label{lem:avl_logn_cn}
  For any string $S$ of length $n$,
  $\avl(S) = O(\cn(S) \log n)$.
\end{Lemma}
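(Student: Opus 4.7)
The plan is to build the AVL-grammar incrementally from $\CN(S) = c_1 \cdots c_{\cn}$, maintaining the invariant that after processing the first $i$ factors there is a single nonterminal $X_i$ whose parse tree is an AVL-tree deriving $c_1 \cdots c_i$; since the parse tree is AVL-balanced, its height is $O(\log n)$. I will then show that extending $X_i$ to $X_{i+1}$ costs only $O(\log n)$ additional productions, which immediately gives the stated bound upon summing over the $\cn(S)$ factors.

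First I would handle the easy case: if $c_{i+1}$ is a fresh letter, then we simply introduce $O(1)$ productions for the new terminal and for a concatenation node joining it with $X_i$, followed by $O(\log n)$ rotations to restore the AVL property. For the nontrivial case where $c_{i+1}$ already occurs as a substring of $c_1 \cdots c_i$, say at position $[a..b]$, the strategy is to carve $c_{i+1}$ out of the existing parse tree of $X_i$. Descending from the root along the two paths to positions $a$ and $b$, and collecting the maximal subtrees that lie entirely inside $[a..b]$, produces a sequence of $O(\log n)$ disjoint AVL-subtrees whose yields concatenate to $c_{i+1}$; none of these subtrees requires any new productions, since each is already present in the grammar built so far.

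The technical core of the proof is a balancing lemma due to Rytter: two AVL-trees of heights $h_1 \le h_2$ can be concatenated into a single AVL-tree using only $O(h_2 - h_1 + 1)$ fresh productions. Applying this lemma to combine the $O(\log n)$ carved pieces left-to-right, I would argue that the total number of new productions telescopes. Concretely, if the pieces have heights $h^{(1)}, h^{(2)}, \ldots$ and we track the height $H^{(k)}$ of the running concatenation, each merge contributes $O(|H^{(k)} - h^{(k+1)}| + 1)$ productions; because the $h^{(k)}$'s arise as subtree heights along root-to-leaf paths of an AVL-tree of height $O(\log n)$, the sum of the absolute differences is also $O(\log n)$. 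Together with the $O(\log n)$ additive ``$+1$'' contributions from the $O(\log n)$ merges, the total cost per factor is $O(\log n)$, giving a new nonterminal $X_{i+1}$. Finally I would concatenate $X_i$ with this newly built nonterminal for $c_{i+1}$ using one more application of the balancing lemma, incurring another $O(\log n)$ productions.

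The main obstacle is justifying that the balancing cost per factor is truly $O(\log n)$ rather than the naive $O(\log^2 n)$ one obtains by charging $O(\log n)$ to each of $O(\log n)$ merges. The telescoping argument relies on the fact that the carved pieces, having been cut out of a single AVL-tree along two root-to-leaf paths, have heights that form (up to reordering) two monotone sequences. Making this geometric intuition precise — and verifying that Rytter's concatenation primitive interacts correctly with the AVL invariant when the running concatenation alternates between tall and short — is the delicate step; everything else is bookkeeping once this is in hand.
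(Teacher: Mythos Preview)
The paper does not actually give a proof of this lemma: it is stated with the citation \cite{Rytter03} and preceded only by the one-sentence summary ``the AVL-grammar for $S$ is constructed from the C-factorization of $S$, by introducing at most $O(\log n)$ new productions for each factor.'' Your proposal is a correct fleshing-out of precisely that construction --- incremental extension along $\CN(S)$, extraction of $O(\log n)$ AVL-subtrees covering an earlier occurrence of each factor, and Rytter's height-difference concatenation primitive with a telescoping bound --- so there is nothing to compare beyond noting that you have supplied the argument the paper chose to import as a black box.
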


Now we show our upper bound for $\zeta(n)$.

\begin{Lemma} \label{lem:zeta_upper_bound}
  $\zeta(n) = O(\log n)$.
\end{Lemma}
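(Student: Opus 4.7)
The plan is to use Rytter's AVL-grammar construction as a bridge between $S$ and its suffixes. First, I would combine Lemmas~\ref{lem:cn_not_larger_than_2zn} and~\ref{lem:avl_logn_cn} to obtain an AVL-grammar $G$ for $S$ of size $\avl(S) = O(\cn(S)\log n) = O(\zn(S)\log n)$.

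Second, I would transform $G$ into a (not necessarily AVL) SLP $G'$ generating the suffix $S[i..]$ with only $O(\log n)$ additional productions. The parse tree of an AVL-grammar has height $O(\log n)$, and if one traces the root-to-leaf path ending at position $i$, every step where the path descends into a left child contributes a right-sibling subtree that lies entirely in $S[i..]$ and is already generated by an existing nonterminal of $G$. Thus $S[i..]$ decomposes into $O(\log n)$ pieces, each produced by a nonterminal of $G$ (or the single leaf at position $i$), which can be glued back together by $O(\log n)$ fresh binary concatenation productions, yielding $|G'| = O(\avl(S) + \log n) = O(\zn(S)\log n)$.

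Third, I would apply the standard bound that for any SLP of size $g$ generating a string $T$ one has $\zn(T) \leq g$: a left-to-right DFS of the parse tree of $G'$ emits one LZ77 factor per repeated occurrence of a nonterminal, referencing the first (already fully emitted) occurrence of that nonterminal, and each such factor can be charged to a distinct grammar production. Since the first occurrence ends strictly before the current position begins, the reference is non-overlapping and is therefore legal as a non-self-referencing LZ77 factor. Applied to $T = S[i..]$ with $g = |G'|$, this gives $\zn(S[i..]) = O(\zn(S)\log n)$ and hence $\zeta(n) = O(\log n)$.

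I expect the main obstacle to be justifying the last step for the non-self-referencing variant: the analogous bound $\zs(T) \leq g^{*}(T)$ for self-referencing LZ77 is classical, but the non-self-referencing inequality $\zn(T) \leq g^{*}(T)$ requires one to verify explicitly, using the Chomsky-normal-form structure of $G'$, that the grammar-induced parsing never creates a reference that overlaps the current position, which is where the distinction between $\zn$ and $\zs$ usually causes trouble.
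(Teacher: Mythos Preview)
Your argument is correct and follows the same overall strategy as the paper: build an AVL-grammar for $S$ of size $O(\zn(S)\log n)$ via Lemmas~\ref{lem:cn_not_larger_than_2zn} and~\ref{lem:avl_logn_cn}, extract from it a small SLP for the suffix $S[i..]$, and then bound $\zn(S[i..])$ by the size of that SLP. The two proofs differ only in execution. For the extraction step, the paper invokes Rytter's AVL \emph{split} operation, which yields another AVL-grammar for $S[i..]$ with $O(\log n)$ additional productions through rotations; your root-to-leaf path decomposition is more elementary and produces a (not necessarily balanced) SLP of the same size without any rotation machinery. For the final step, the paper routes through the C-factorization, applying Rytter's bound $\cn(S') \leq g$ for any SLP of size $g$ generating $S'$ together with Lemma~\ref{lem:cn_larger_than_zn}; your direct DFS argument reaches the same conclusion, and the obstacle you anticipate is not actually one: because the DFS proceeds left to right, the first occurrence of every repeated nonterminal is fully emitted before its next occurrence begins, so every reference is non-overlapping and the induced parsing is a legitimate non-self-referencing one with $O(g)$ phrases, which the greedy $\NLZ$ can only improve.
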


\begin{proof}
  Suppose we have two AVL-grammars for strings $X$ and $Y$
  of respective sizes $\avl(X)$ and $\avl(Y)$.
  Rytter~\cite{Rytter03} showed how to build
  an AVL-grammar for the concatenated string $XY$
  of size $\avl(X) + \avl(Y) + O(h)$,
  where $h$ is the height of the taller parse tree of the two AVL-grammars
  before the concatenation.
  This procedure is based on a folklore algorithm
  (cf~\cite{Knuth98}) that concatenates two given AVL-trees of
  height $h$ with $O(h)$ node rotations.
  In the concatenation procedure of AVL-grammars,
  $O(1)$ new productions are produced per node rotation.
  Therefore, $O(h)$ new productions are produced in
  the concatenation operation.

  Suppose we have the AVL-grammar of a string $S$ of length $n$.
  It contains $\avl(S)$ productions
  and the height of its parse tree is $h = O(\log n)$
  since an AVL-tree is a balanced binary tree.
  For any proper suffix $S' = S[i..]$ of $S$ with $1 < i \leq n$,
  we split the AVL-grammar into two AVL-grammars,
  one for the prefix $S[1..i]$ and the other for the suffix $S[i..n]$.
  We ignore the former and concentrate on the latter for our analysis.
  Since split operations on a given AVL-grammar can be
  performed in a similar manner to the afore-mentioned concatenation operations,
  we have that $\avl(S') \leq \avl(S) + a \log n$ for some constant $a > 0$.
  Now it follows from Lemma~\ref{lem:cn_larger_than_zn},
  Lemma~\ref{lem:cn_not_larger_than_2zn}, Lemma~\ref{lem:avl_logn_cn},
  and that the size $\cn$ of the C-factorization of any string
  is no more than the number of productions in any SLP generating
  the same string~\cite{Rytter03},
  we have
  \[
    \zn(S') \leq \cn(S') \leq \avl(S') \leq \avl(S) + a \log n \leq a' \cn(S) \log n \leq 2 a' \zn(S) \log n
  \]
  where $a' > 0$ is a constant.
  This gives us $\zn(S') / \zn(S) = O(\log n)$
  for any string $S$ of length $n$ and any of its proper suffix $S'$.
\end{proof}

Since the size $\zn(S)$ of the non self-referencing LZ77 factorization
of any string $S$ of length $n$ is at least $\log n$,
the next corollary is immediate from Lemma~\ref{lem:zeta_upper_bound}:
\begin{Corollary} \label{coro:zeta_upper_bound}
  For any string $S$ and its proper suffix $S'$,
  $\zn(S') / \zn(S) = O(\zn(S))$.
\end{Corollary}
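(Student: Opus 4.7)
The plan is to bootstrap Lemma~\ref{lem:zeta_upper_bound}, which already gives $\zn(S')/\zn(S) = O(\log n)$, by replacing the $\log n$ factor with a quantity bounded by $\zn(S)$ itself. The statement reduces to showing that $\log n = O(\zn(S))$ for any string $S$ of length $n$, since then $\zn(S')/\zn(S) = O(\log n) = O(\zn(S))$ is immediate.

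To establish $\zn(S) = \Omega(\log n)$, I would argue directly from the definition of $\NLZ$. Let $f_1, \ldots, f_\zn$ be the non self-referencing LZ77 factors, with starting positions $u_i = |f_1 \cdots f_{i-1}|+1$. For each $i \geq 2$, the previous occurrence that defines $f_i$ ends by position $u_i-1$, so $p_i \leq u_i - 1$ and hence $|f_i| = p_i + 1 \leq u_i = |f_1 \cdots f_{i-1}|+1$. This yields
\[
  |f_1 \cdots f_i| + 1 \;\leq\; 2\bigl(|f_1 \cdots f_{i-1}| + 1\bigr),
\]
so starting from $|f_1| = 1$ we get $n + 1 = |f_1\cdots f_\zn| + 1 \leq 2^{\zn(S)}$, i.e.\ $\zn(S) \geq \log_2(n+1)$.

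Plugging this lower bound into Lemma~\ref{lem:zeta_upper_bound} gives the claim: $\zn(S')/\zn(S) = O(\log n) = O(\zn(S))$. The only mildly delicate point is the inductive inequality on prefix lengths; the doubling bound depends on the fact that under the non self-referencing rule the source of each factor lies entirely within $S[1..u_i-1]$, which forces $p_i \leq u_i - 1$. Once this is observed there is no real obstacle — the corollary follows in a single line from Lemma~\ref{lem:zeta_upper_bound}.
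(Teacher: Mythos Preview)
Your proposal is correct and follows exactly the paper's approach: the paper deduces the corollary in one line from Lemma~\ref{lem:zeta_upper_bound} together with the observation that $\zn(S) \geq \log n$ for any length-$n$ string. You simply supply a detailed justification of that lower bound via the doubling argument $|f_1\cdots f_i|+1 \leq 2(|f_1\cdots f_{i-1}|+1)$, which the paper asserts without proof.
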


\subsection{Compressed communication complexity of Hamming distance}

Now we have the main result of this section.

\begin{Theorem}
  Suppose that the alphabet $\Sigma$
  and the length $n$ of strings $A$ and $B$ are known to both Alice and Bob.
  Then, there exists a randomized public-coin protocol
  which solves Problem~\ref{prob:HD_NLZ} with
  communication complexity
  $\langle O(d \log \zn), O(d \log \ell_{\max}) \rangle$,
  where $\zn = \zn(A)$ and
  $\ell_{\max} = \max_{1 < k \leq d} \{i_{k} - i_{k-1}+1\}$.
\end{Theorem}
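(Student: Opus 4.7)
The plan is to combine the black-box bound of Lemma~\ref{lem:HD_basic_lemma} with the suffix bound of Corollary~\ref{coro:zeta_upper_bound}. I would first invoke Lemma~\ref{lem:HD_basic_lemma} to obtain a randomized public-coin protocol solving Problem~\ref{prob:HD_NLZ} whose communication complexity is $\langle O(\sum_{k=1}^{d} \log \zn(A[i_k+1..])), O(d \log \ell_{\max}) \rangle$. The number-of-bits term $O(d \log \ell_{\max})$ already matches the claim, so no further work is required on that coordinate; what remains is to control the number-of-rounds term by replacing each suffix size $\zn(A[i_k+1..])$ with the ambient size $\zn(A)$.

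For the $k$-th mismatching position $i_k$, the substring $A[i_k+1..]$ is a proper suffix of $A$ (the case $i_1 = 0$ aside, which only contributes the original string itself). Applying Corollary~\ref{coro:zeta_upper_bound} to each such suffix yields $\zn(A[i_k+1..]) = O(\zn(A) \cdot \zn(A)) = O(\zn(A)^2)$. Taking logarithms on both sides gives $\log \zn(A[i_k+1..]) = O(\log \zn(A))$ for every $k$. Summing this over the $d$ mismatching indices produces
\begin{equation*}
\sum_{k=1}^{d} \log \zn(A[i_k+1..]) \;=\; O(d \log \zn(A)) \;=\; O(d \log \zn),
\end{equation*}
which is exactly the round complexity claimed.

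The substantive content of the argument is entirely contained in the upper bound on $\zeta(n)$ established via C-factorization and AVL-grammars (Lemmas~\ref{lem:cn_larger_than_zn}, \ref{lem:cn_not_larger_than_2zn}, \ref{lem:avl_logn_cn}, and \ref{lem:zeta_upper_bound}); the deduction from Corollary~\ref{coro:zeta_upper_bound} to the final theorem is purely arithmetic. I therefore do not anticipate a real obstacle here: the only thing worth being careful about is that Corollary~\ref{coro:zeta_upper_bound} gives $\zn(S')/\zn(S) = O(\zn(S))$ rather than $O(\log n)$, but this is precisely what lets the $\log$ collapse into $O(\log \zn)$ without reintroducing a dependence on $n$. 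Combining all of the above yields the stated communication complexity $\langle O(d \log \zn), O(d \log \ell_{\max}) \rangle$.
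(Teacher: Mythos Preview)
Your proposal is correct and follows essentially the same approach as the paper: invoke Lemma~\ref{lem:HD_basic_lemma} for the black-box bound, then apply Corollary~\ref{coro:zeta_upper_bound} to get $\zn(A[i_k+1..]) = O(\zn(A)^2)$, take logarithms, and sum over the $d$ mismatches. The only minor slip is the parenthetical about ``the case $i_1 = 0$,'' which cannot occur since mismatching positions satisfy $i_1 \geq 1$; this does not affect the argument.
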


\begin{proof}
  The protocol of Lemma~\ref{lem:HD_basic_lemma}
  has $O(\sum_{k = 1}^{d} \log \zn(A[i_k+1..]))$ rounds.
  By Corollary~\ref{coro:zeta_upper_bound},
  we have that $\zn(A[i_k+1..]) = O(\zn(A)^2)$.
  Therefore,
  $\sum_{k = 1}^{d} \log \zn(A[i_k+1..]) = O(d \log \zn(A))$,
  which proves the theorem.
\end{proof}

\section{Conclusions and open questions}

This paper showed a randomized public-coin protocol
for a joint computation of the Hamming distance 
of two compressed strings.
Our Hamming distance protocol relies on Bille et al.'s
LCP protocol for two strings that are compressed by
non self-referencing LZ77,
while our communication complexity analysis is based on
new combinatorial properties of non self-referencing LZ77 factorization.


As a further research,
it would be interesting to consider 
the communication complexity of the Hamming distance problem
using self-referencing LZ77.
The main question to this regard is whether 
$\zs(S[i..]) = O(\poly(\zs(S)))$ holds for any suffix $S[i..]$ of any string $S$.
In the case of non self-referencing LZ77,
$\zn(S[i..]) = O(\zn(S)^2)$ holds due to Lemma~\ref{coro:zeta_upper_bound}.

\section*{Acknowledgments}
This work was supported by JSPS KAKENHI Grant Numbers
JP18K18002 (YN),
JP20H04141 (HB), JP18H04098 (MT),
and JST PRESTO Grant Number JPMJPR1922 (SI).

\bibliographystyle{abbrv}
\bibliography{ref}

\end{document}